\tikzstyle{kringel}=[decoration={random steps,segment length=3pt,amplitude=1pt,pre=lineto,pre length=.25cm,post=lineto,post length=.25cm}]
\title{Parameterized Complexity of CTL:\\ A Generalization of Courcelle's Theorem}
\author{Martin~Lück\and Arne~Meier \and Irina Schindler\thanks{Supported in part by DFG ME 4279/1-1.}}
\institute{Institut für Theoretische Informatik\\Leibniz Universit\"at Hannover \\ 
\texttt{$\{$lueck, meier, schindler$\}$@thi.uni-hannover.de}}
\begin{document}

\maketitle

\begin{abstract}
We present an almost complete classification of the parameterized complexity of all operator fragments of the satisfiability problem in computation tree logic CTL. The investigated parameterization is the sum of temporal depth and structural pathwidth. The classification shows a dichotomy between W[1]-hard and fixed-parameter tractable fragments. The only real operator fragment which is confirmed to be in FPT is the fragment containing solely AX. Also we prove a generalization of Courcelle's theorem to infinite signatures which will be used to proof the FPT-membership case.
\end{abstract}

\section{Introduction} \label{sect:intro}

Temporal logic is the most important concept in computer science in the area of program verification and is a widely used concept to express specifications. Introduced in the late 1950s by Prior \cite{pr57} a large area of research has been evolved up to today. Here the most seminal contributions have been made by Kripke \cite{kri63}, Pnueli \cite{pn77}, Emerson, Clarke, and Halpern \cite{emha85,clem81} to name only a few. The maybe most important temporal logic so far is the computation tree logic CTL due to its polynomial time solvable model checking problem which influenced the area of program verification significantly. However the satisfiability problem, i.e., the question whether a given specification is consistent, is beyond tractability, i.e., complete for deterministic exponential time. One way to attack this intrinsic hardness is to consider restrictions of the problem by means of operator fragments leading to a trichotomy of computational complexity shown bei Meier \cite{meier11}. This landscape of intractability depicted completeness results for nondeterministic polynomial time, polynomial space, and (of course) deterministic exponential time showing how combinations of operators imply jumps in computational complexity of the corresponding satisfiability fragment.

For more than a decade now there exists a theory which allows us to better understand the structure of intractability: 1999 Downey and Fellows developed the area of parameterized complexity \cite{dofe99} and up to today this field has grown vastly. Informally the main idea is to detect a specific part of the problem, the \emph{parameter}, such that the intractability of the problems complexity vanishes if the parameter is assumed to be constant. Through this approach the notion of \emph{fixed parameter tractability} has been founded. A problem is said to be fixed parameter tractable (or short, FPT) if there exists a deterministic algorithm running in time $f(k)\cdot\textit{poly}(n)$ for all input lengths $n$, corresponding parameter values $k$, and a recursive function $f$. As an example, the usual propositional logic satisfiability problem SAT (well-known to be NP-complete) becomes fixed parameter tractable under the parameter number of variables.

\usetikzlibrary{backgrounds}
\begin{wrapfigure}[17]{r}{5.3cm}
 \centering
 \begin{tikzpicture}[x=2cm,y=.9cm,bend angle=15,state/.style={rectangle, inner sep=.6mm, rounded corners,font=\scriptsize},scale=0.55,fpt/.style={fill=gray!50!white},w1/.style={fill=gray!50!black,font=\color{white}\scriptsize}]
	\node[state,fpt] (empty) at (0,0) {$\emptyset$};
	\node[state,fpt] (AX) at (0,1) {$\AX$};
	\node[state,fill=white,draw=black] (AF) at (-1,1) {$\AF$};
	\node[state,w1] (AG) at (1,1) {$\AG$};

	\node[state,w1] (AU) at (-2,1.5) {$\AU$};
	\node[state,w1] (EU) at (2,1.5) {$\EU$};
	
	\node[state,w1] (AXAF) at (-1,2) {$\AX,\AF$};
	\node[state,w1] (AFAG) at (0,2) {$\AF,\AG$};
	\node[state,w1] (AXAG) at (1,2) {$\AX,\AG$};
	
	\node[state,w1] (AXAFAG) at (0,3) {$\AX,\AF,\AG$};
	
	\node[state,w1] (AXAU) at (-1.8,3.5) {$\AX,\AU$};
	\node[state,w1] (AXEU) at (1.8,3.5) {$\AX,\EU$};
	
	\node[state,w1] (AGAU) at (-.7,3.9) {$\AG,\AU$};

	\node[state,w1] (AFEU) at (.55,4.3) {$\AF,\EU$};
	
	\node[state,w1] (AXAFEU) at (0,6) {$\AX,\AF,\EU$};

\begin{scope}[on background layer]
	\path[shorten <= -2pt+\pgflinewidth,shorten >= -2pt+\pgflinewidth]
	(empty) edge node {} (AF)
		edge node {} (AX)
		edge[shorten >= -2pt+\pgflinewidth] node {} (AG)
		edge[bend left] node {} (AU)
		edge[bend right] node {} (EU)
	(AF) edge node {} (AU)
		edge node {} (AXAF)
		edge node {} (AFAG)
	(AX) edge node {} (AXAF)
		edge node {} (AXAG)
	(AG) edge node {} (AFAG)
		edge node {} (AXAG)
		edge node {} (EU)
	(AU) edge node {} (AXAU)
		edge node {} (AGAU)
	(EU) edge node {} (AXEU)
		edge node {} (AFEU)
	(AXAF) edge node {} (AXAU)
		edge node {} (AXAFAG)
	(AFAG) edge node {} (AXAFAG)
		edge[bend left,out=35,in=145,shorten <= -2pt+\pgflinewidth] node {} (AGAU)
	(AXAG) edge node {} (AXAFAG)
		edge node {} (AXEU)
	(AXAU) edge node {} (AXAFEU)
	(AXEU) edge node {} (AXAFEU)
	(AXAFAG) edge node {} (AXAFEU)
	(AGAU) edge node {} (AFEU)
	(AFEU) edge node {} (AXAFEU);
\end{scope}

	\node[state,w1,label={0:\scriptsize$\W1$-hard}] at (-2,-1) {$\phantom{\AX}$};
	\node[state,fill=white,draw=black,label={0:\scriptsize open}] at (0,-1) {$\phantom{\AX}$};
	\node[state,fpt,label={0:\scriptsize\FPT}] at (1.2,-1) {$\phantom{\AF}$};
	
\end{tikzpicture}
\caption{Parameterized complexity of $\CTLSAT(\mathcal T)$ parameterized by formula pathwidth and temporal depth (see \Cref{thm:pw+md}).}
\end{wrapfigure}
In this work we almost completely classify the parameterized complexity of all operator fragments of the satisfiability problem for the computation tree logic CTL under the parameterization of formula pathwidth and temporal depth. Only the case for $\AF$ resisted a full classification. We will explain the reasons in the conclusion. For all other fragments we show a dichotomy consisting of two fragments being fixed parameter tractable and the remainder being hard for the complexity class $\W1$ under fpt-reductions. $\W1$ can be seen as an analogue of intractability in the decision case in the parameterized world. To obtain this classification we prove a generalization of Courcelle's theorem \cite{courcelle} for \emph{infinite} signatures which may be of independent interest.

\paragraph{Related work.} Similar research for modal logic has been done by Praveen and influenced the present work in some parts \cite{prav14}. Other applications of Courcelle's theorem have been investigated by Meier~et~al.\ \cite{mstv12} and Gottlob~et~al.\ \cite{gopiwe10}. In 2010 Elberfeld~et~al.\ proved that Courcelle's theorem can be extended to give results in XL as well \cite{ebjata10} wherefore the results of \Cref{cor:AX} can be extended to this class, too.

\section{Preliminaries} \label{sect:prelim}
We assume familiarity with standard notions of complexity theory as Turing machines, reductions, the classes \P and \NP. For an introduction into this field we confer the reader to the very good textbook of Pippenger \cite{pip97b}.

\subsection{Complexity Theory} Let $\Sigma$ be an alphabet. A pair $\Pi = (Q, \kappa)$ is a \emph{parameterized problem} if $Q \subseteq\Sigma^{*}$ and $\kappa \colon \Sigma^{*} \rightarrow \N$ is a function. For a given instance $x\in\Sigma^{*}$ we refer to $x$ as the \emph{input}. A function $\kappa\colon\Sigma^{*}\to\N$ is said to be a \emph{parameterization of $\Pi$} or the \emph{parameter of $\Pi$}. We say a parameterized problem $\Pi$ is \emph{fixed-parameter tractable} (or in the class $\FPT$) if there exists a deterministic algorithm deciding $\Pi$ in time $f(\kappa(x))\cdot|x|^{O(1)}$ for every $x\in\Sigma^{*}$ and a recursive function $f$. Note that the notion of fixed-parameter tractability is easily extended beyond decision problems.

If $\Pi = (Q, \kappa), \Pi' = (Q', \kappa')$ are parameterized problems over alphabets $\Sigma,\Delta$ then an \emph{fpt-reduction from $\Pi$ to $\Pi'$} (or in symbols $\Pi\lefpt\Pi'$) is a mapping $r\colon\Sigma^{*}\to\Delta^{*}$ with the following three properties:
\begin{enumerate}
 \item For all $x\in\Sigma^{*}$ it holds $x\in Q$ iff $r(x)\in Q'$.
 \item $r$ is fixed-parameter tractable, i.e., $r$ is computable in time $f(\kappa(x)) \cdot |x|^\bigO{1}$ for a recursive function $f\colon\N\to\N$.
 \item There exists a recursive function $g\colon\N\to\N$ such that for all $x\in\Sigma^{*}$ it holds $\kappa'(r(x))\leq g(\kappa(x))$.
\end{enumerate}

The class $\W1$ is a parameterized complexity class which plays a similar role as $\NP$ in the sense of intractability in the parameterized world. The class \W1 is a superset of $\FPT$ and a hierarchy of other $\W{}$-classes are build above of it: $\FPT\subseteq\W1\subseteq\W2\subseteq\cdots\subseteq\W\P$. All these classes are closed under fpt-reductions. It is not known whether any of these inclusions is strict. For further information on this topic we refer the reader to the text book of Flum and Grohe \cite{flgr06}.

\subsection{Tree- and Pathwidth} Given a structure $\mathcal A$ we define a \emph{tree decomposition of $\mathcal A$} (with universe $A$) to be a pair $(T,X)$ where $X=\{B_{1},\dots,B_{r}\}$ is a family of subsets of $A$ (the set of \emph{bags}), and $T$ is a tree whose nodes are the bags $B_{i}$ satisfying the following conditions:
\begin{enumerate}
 \item Every element of the universe appears in at least one bag: $\bigcup X=A$.
 \item Every Tuple is contained in a bag: for each $(a_{1},\dots,a_{k})\in R$ where $R$ is a relation in $\mathcal A$, there exists a $B\in X$ such that $\{a_{1},\dots,a_{k}\}\in B$.
 \item For every element $a$ the set of bags containing $a$ is connected: for all $a\in A$ the set $\{B\mid a\in B\}$ forms a connected subtree in $T$.
\end{enumerate}

The \emph{width} of a decomposition $(T,X)$ is $\width(T,X)\dfn\max\{|B|\mid B\in X\}-1$ which is the size of the largest bag minus 1. The \emph{treewidth} of a structure $\mathcal A$ is 
the minimum of the widths of all tree decompositions of $\mathcal A$. Informally the treewidth of a structure describes the tree-likeliness of it. The closer the value is to 1 the more the structure is a tree.

A \emph{path decomposition} of a structure $\mathcal A$ is similarly defined to tree decompositions however $T$ has to be  a path. Here $\pw(\mathcal A)$ denotes the \emph{pathwidth} of $\mathcal A$. Likewise the size of the pathwidth describes the similarity of a structure to a path. Observe that pathwidth bounds treewidth from above.

\subsection{Logic} Let $\Phi$ be a finite set of propositional letters. A \emph{propositional formula} (\PL formula) is inductively defined as follows. The constants $\true,\false,$ (true, false) and any \emph{propositional letter} (or \emph{proposition}) $p\in\Phi$ are \PL formulas. If $\phi,\psi$ are \PL formulas then so are $\phi\land\psi,\lnot\phi,\phi\lor\psi$ with their usual semantics (we further use the shortcuts $\to,\leftrightarrow$). Temporal logic extends propositional logic by introducing four  \emph{temporal operators}, i.e., \emph{next} $\X$, \emph{future} $\F$, \emph{globally} $G$, and \emph{until} $\U$. Together with the two \emph{path quantifiers}, \emph{exists} $\E$ and \emph{all} $\A$, they fix the set of \emph{computation tree logic formulas} (\CTL formulas) as follows. If $\phi\in\PL$ then $\mathsf P\mathsf T\phi,\mathsf P[\phi\U\psi]\in\CTL$ and if $\phi,\psi\in\CTL$ then $\mathsf P\mathsf T\phi,\mathsf P[\phi\U\psi],\phi\lor\psi,\lnot\psi,\phi\land\psi\in\CTL$ hold, where $\mathsf P\in\{\A,\E\}$ is a path quantifier and $\mathsf T\in\{\X,\F,\G\}$ is a temporal operator. The pair of a single path quantifier and a single temporal operator is referred to as a CTL-operator. If $T$ is a set of CTL-operators then $\CTL(T)$ is the restriction of $\CTL$ to formulas that are allowed to use only CTL-operators from $T$.

Let us turn to the notion of Kripke semantics. Let $\Phi$ be a finite set of propositions. A \emph{Kripke structure} $K=(W,R,V)$ is a finite set of \emph{worlds} $W$, a \emph{total successor relation} $R\colon W\to W$ (i.e., for every $w\in W$ there exists a $w'\in W$ with $wRw'$), and an \emph{evaluation function} $V\colon W\to2^{\Phi}$ labeling sets of propositions to worlds. A \emph{path} $\pi$ in a Kripke structure $K=(W,R,V)$ is an infinite sequence of worlds $w_{0},w_{1},\dots$ such that for every $i\in\N$ $w_{i}Rw_{i+1}$. With $\pi(i)$ we refer to the $i$-th world $w_{i}$ in $\pi$. Denote with $\allPaths(w)$ the set of all paths starting at $w$. For $\CTL$ formulas we define the semantics of $\CTL$ formulas $\phi,\psi$ for a given Kripke structure $K=(W,R,V)$, a world $w\in W$, and a path $\pi$ as 
\begin{align*}
 &K,w\models\A\mathsf T\phi &&\Leftrightarrow&&  \text{for all $\pi\in\allPaths(w)$ it holds $K,\pi\models\mathsf T\phi$},\\
 &K,w\models\E\mathsf T\phi &&\Leftrightarrow&& \text{there exists a $\pi\in\allPaths(w)$ it holds $K,\pi\models\mathsf T\phi$},\\
 &K,\pi\models\X\phi &&\Leftrightarrow&& K,\pi(1)\models\phi,\\
 &K,\pi\models\F\phi &&\Leftrightarrow&& \text{there exists an $i\ge0$ such that }K,\pi(i)\models\phi,\\
 &K,\pi\models\G\phi &&\Leftrightarrow&& \text{for all $i\ge0$ }K,\pi(i)\models\phi,\\
 &K,\pi\models\phi\U\psi &&\Leftrightarrow&& \text{$\exists i\ge0\forall j< i$ }K,\pi(j)\models\phi\text{ and }K,\pi(i)\models\psi.
\end{align*}

For a formula $\phi\in\CTL$ we define the satisfiability problem $\CTLSAT$ asking if there exists a Kripke structure $K=(W,R,V)$ and $w\in W$ such that $K,w\models\phi$. Then we also say that $M$ is a \emph{model (of $\phi$}. Similar to before $\CTLSAT(T)$ is the restriction of $\CTLSAT$ to formulas in $\CTL(T)$ for a set of CTL-operators $T$. A formula $\phi\in\CTL$ is said to be in \emph{negation normal form (NNF)} if its negation symbols $\lnot$ occur only in front of propositions; we will use the symbol $\CTLNNF$ to denote the set of CTL-formulas which are in NNF only.

Given $\phi\in\CTL$ we define $\SF\phi$ as the \emph{set of all subformulas of $\phi$} (containing $\phi$ itself). The \emph{temporal depth of $\phi$}, in symbols $\td(\phi)$, is defined inductively as follows. If $\Phi$ is a finite set of propositional symbols and $\phi,\psi\in\CTL$ then 
$$
\begin{array}{lc@{\qquad}ll}
 \td(p) &\dfn 0,& \td(\phi\circ\psi)&\dfn\max\{\td(\phi),\td(\psi)\},\\
 \td(\true) &\dfn 0,& \td(\lnot\phi)&\dfn\td(\phi),\\
 \td(\false) &\dfn 0,& \td(\mathsf P\mathsf T\phi)&\dfn\td(\phi)+1,\\
 && \td(\mathsf P[\phi\U\psi])&\dfn\max\{\td(\phi),\td(\psi)\}+1,\\
\end{array}
$$
where $\circ\in\{\land,\lor,\to,\leftrightarrow\}$, $\mathsf P\in\{\A,\E\}$, and $\mathsf T\in\{\X,\F,\G\}$. If $\psi\in\SF\phi$ then the \emph{temporal depth of $\psi$ in $\phi$} is $\td_{\phi}(\psi)\dfn\td(\phi)-\td(\psi)$.

\emph{Vocabularies} are \emph{finite} sets of \emph{relation symbols} (or \emph{predicates}) of finite arity $k\ge 1$ (if $k=1$ then we say the predicate is \emph{unary}) which are usually denoted with the symbol $\tau$. Later we will also refer to similar objects of infinite size wherefore we prefer to denote them with the term \emph{signature} which usually is an countable infinite sized set of symbols. A \emph{structure} $\mathcal A$ over a vocabulary (or signature) $\tau$ consists of a \emph{universe} $A$ which is a non-empty set, and a relation $P^{\mathcal A}\subseteq A^{k}$ for each predicate $P$ of arity $k$. Monadic second order logic (MSO) is the restriction of second order logic (SO) in which only quantification over unary relations is allowed (elements of the universe can still be quantified existentially or universally). If $P$ is a unary predicate then $P(x)$ is true if and only if $x\in P$ holds (otherwise it is false).

\section{Parameterized Complexity of CTL-SAT(T)}\label{pg:structure}
In this section we investigate all operator fragments of $\CTLSAT$ parameterized by temporal depth and formula pathwidth with respect to its parameterized complexity. This means, we the given formulas from $\CTL$ as input are represented by relational structures as follows.

Let $\varphi\in\CTL$ be a \CTL formula. The vocabulary of our interest is $\tau$ being defined as $\tau\dfn \{\const^1_f \mid f\in\{\true,\false\}\} \, \cup \, \{\conn_{f,i}^2 \mid f \in\{\land,\lor,\lnot\}, 1 \leq i \leq \arity f\} \, \cup \, \{\var^{1}, \repr^{1}, \reprPL^{1}\} \,\cup\,\{\repr_\mathsf{C}^{1}, \body_\mathsf{C}^{2}\mid\mathsf C\text{ is a unary CTL-operator}\}\,\cup\,\{\repr_\mathsf{C}^{1}, \body_\mathsf{C}^{3}\mid\mathsf C\text{ is a binary CTL-operator}\}$. We then associate the vocabulary $\tau$ with the structure $\mathcal{A}_{\varphi}$ where its universe consists of elements representing subformulas of $\varphi$. The predicates are defined as follows
\begin{itemize}
 \item $\var^{1}(x)$ holds iff $x$ represents a variable,
 \item $\repr^{1}(x)$ holds iff $x$ represents the formula $\varphi$,
 \item $\reprPL^{1}(x)$ holds iff $x$ represents a propositional formula,
 \item $\repr_{\mathsf{C}}^{1}(x)$ holds iff $x$ represents a formula $\mathsf C\psi$ where $\mathsf C$ is a CTL-operator,
 \item $\body_{\mathsf C}^{2}(y,x)$ holds iff $x$ represents a formula $\mathsf C\psi$ and $\psi$ is represented by $y$ where $\mathsf C$ is a unary CTL-operator,
 \item $\body_{\mathsf C}^{3}(y,z,x)$ holds iff $x$ represents a formula $\mathsf C(\psi,\chi)$ and $\psi$ / $\chi$ is represented by $y$ / $z$ where $\mathsf C$ is a binary CTL-operator,
 \item $\const_f^1(x)$ holds iff $x$ represents the constant of $f$,
 \item $\conn^2_{f,i}(x,y)$ holds iff $x$ represents the $i$th argument of the function $f$ at the root of the formula tree represented by $y$.
 \end{itemize}

As an example, the corresponding structure $\mathcal A_\varphi$ for the formula
 $
  \varphi\dfn\EX(\AG(p\land\lnot (\EF z)))\lor(\lnot(\A[p\U(\EF z)]))
 $
is shown in \Cref{fig:Aphi}.

\begin{figure}
 \centering
  \begin{tikzpicture}[every node/.style={font=\sffamily}, state/.style={circle,fill=darkgray,minimum width=2mm,inner sep=0mm},y=1.3cm,x=1.5cm]
 \node[state,label={0:$\varphi$},label={180:\footnotesize repr}] (phi) at (0,0) {};
 
 \node[state,label={180:$\EX(\AG(p\land\lnot (\EF z)))$},label={0:\footnotesize repr$_{\text{EX}}$}] (EX) at (-1,-1) {};
\node[state,label={0:$\lnot(\A[p\U(\EF z)])$}] (neg) at (1,-1) {};

 \node[state,label={180:$\AG(p\land\lnot (\EF z)))$},label={0:\footnotesize repr$_{\text{AG}}$}] (AG) at (-1,-2) {};
 \node[state,label={0:$\quad\A[p\U(\EF z)]$},label={180:\footnotesize repr$_{\text{AU}}$}] (AU) at (1,-2) {};
 
 \node[state,label={0:$ p$},label={180:$\substack{\text{\footnotesize repr$_{\text{PL}}$}\\\text{var}}$}] (p) at (.3,-3) {};
 \node[state,label={0:$\EF z$},label={180:\footnotesize repr$_{\text{EF}}$}] (EF1) at (1.7,-3) {};
 
 \node[state,label={0:$ z$},label={180:$\substack{\text{\footnotesize repr$_{\text{PL}}$}\\\text{var}}$}] (z) at (1.7,-4) {};
 
 \node[state,label={180:$ p\land\lnot(\EF z)$},label={0:}] (pAnd-) at (-1.7,-3) {};
 
\node[state,label={180:$ \lnot(\EF z)$},label={0:}] (-EF) at (-2.4,-4) {};
 
 \node[state,label={180:$\EF z$},label={0:\footnotesize repr$_{\text{EF}}$}] (EF2) at (-2.4,-5) {};

\path[-,darkgray] (phi) edge node [midway, above, sloped,outer sep=-1mm] {\footnotesize conn$_{\lor,1}$} (EX);
\path[-,darkgray] (phi) edge node [midway, above, sloped,outer sep=-1mm] {\footnotesize conn$_{\lor,2}$} (neg);
\path[-,darkgray] (EX) edge node [midway, above, sloped,outer sep=-1mm] {\footnotesize body$_{\text{EX}}$} (AG);
\path[-,darkgray] (neg) edge node [midway, above, sloped,outer sep=-1mm] {\footnotesize conn$_{\lnot,1}$} (AU);
\path[-,darkgray] (AU) edge node [midway, above, sloped,outer sep=-1mm] {\footnotesize body$_{\text{AU},1}\;$} (p);
\path[-,darkgray] (AU) edge node [midway, above, sloped,outer sep=-1mm] {\footnotesize body$_{\text{AU},2}$} (EF1);
\path[-,darkgray] (EF1) edge node [midway, above, sloped,outer sep=-1mm] {\footnotesize body$_{\text{EF}}$} (z);
\path[-,darkgray] (AG) edge node [midway, above, sloped,outer sep=-1mm] {\footnotesize body$_{\text{AG}}\;$} (pAnd-);
\path[-,darkgray] (pAnd-) edge [out=315,in=225] node [midway, above, sloped,outer sep=-.8mm] {\footnotesize conn$_{\land,1}$} (p);
\path[-,darkgray] (pAnd-) edge node [midway, above, sloped,outer sep=-1mm] {\footnotesize conn$_{\land,2}\;$} (-EF);
\path[-,darkgray] (-EF) edge node [midway, above, sloped,outer sep=-1mm] {\footnotesize conn$_{\lnot,1}\;$} (EF2);
\path[-,darkgray] (EF2) edge [out=335,in=235] node [midway, above, sloped,outer sep=-.8mm] {\footnotesize body$_{\text{EF}}$} (z);
\end{tikzpicture}
 \caption{Example relational structure $\mathcal A_\varphi$.}\label{fig:Aphi}
\end{figure}

Now we consider the problem $\CTLSAT$ parameterized by the pathwidth of its instance structures $\mathcal A_\varphi$ (for the instances $\varphi$) as well as the temporal depth of the formula. Hence the parameterization function $\kappa$ maps, given an instance formula $\varphi\in\CTL$ to the pathwidth of the structures $\mathcal A_\varphi$ plus the temporal depth of $\varphi$, i.e., $\kappa(\varphi)=\pw(\mathcal A_\varphi)+\td(\varphi)$.

The following theorem summarizes the collection of results we have proven in the upcoming lemmas. The subsection on page~\pageref{pg:genCourcelle} contains the \FPT result together with the generalization of Courcelle's theorem to infinite signatures.

\begin{theorem}\label{thm:pw+md}
 $\CTLSAT(T)$ parameterized by formula pathwidth and temporal depth is
\begin{enumerate}
 \item in $\FPT$ if $T=\{\AX\}$ or $T=\emptyset$, and 
 \item $\W1$-hard if $\AG \in T$, or $\AU\in T$, or $\{\AX,\AF\} \subseteq T$.
\end{enumerate}
\end{theorem}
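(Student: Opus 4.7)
The plan is to split the theorem along its two parts and handle each operator fragment individually by a lemma. The \FPT side will reduce to an application of the generalized Courcelle theorem announced on page~\pageref{pg:genCourcelle}, while the \W1-hardness side will be obtained by fpt-reductions from a standard \W1-hard parameterized problem such as \emph{Multicolored Clique} parameterized by the number $k$ of colors.

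For part~(1), I would first dispose of $T=\emptyset$: here $\CTLSAT(T)$ is ordinary propositional SAT, and a short \MSO sentence over $\mathcal A_\varphi$ can existentially guess a set $U$ (encoding the true variables) and verify bottom-up that the induced evaluation makes the element satisfying $\repr^1$ true. This sentence has constant size, so the classical Courcelle theorem on pathwidth-bounded structures (pathwidth bounds treewidth) gives \FPT. For $T=\{\AX\}$, the key observation is that a satisfying model can be taken to be a single path of length at most $\td(\varphi)$, since $\AX$ only forces information at the next world. One can therefore write, for each depth $d\le\td(\varphi)$, an \MSO formula that existentially guesses the truth sets $U_0,\dots,U_{\td(\varphi)}$ of subformulas at the respective worlds and asserts the obvious local semantic consistency between consecutive levels. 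The resulting \MSO sentence has size bounded by a function of $\td(\varphi)$ only and uses the signature~$\tau$, so we apply the infinite-signature version of Courcelle's theorem to get \FPT in $\pw(\mathcal A_\varphi)+\td(\varphi)$.

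For part~(2), for each of the three cases $\AG\in T$, $\AU\in T$, $\{\AX,\AF\}\subseteq T$, I would produce an fpt-reduction from Multicolored Clique on a graph $G=(V,E)$ with color classes $V_1,\dots,V_k$. The skeleton of all three reductions is the same: build a formula $\varphi_{G,k}$ whose intended model picks, for every color $i$, a vertex $v_i\in V_i$ and whose Kripke structure witnesses that every pair $\{v_i,v_j\}$ is an edge in $E$. The operator $\AG$ is used to propagate the chosen vertices and the adjacency constraints along all reachable worlds simultaneously; $\AU$ achieves the same by forcing the adjacency formula to hold until a terminal marker; and in the $\{\AX,\AF\}$-case one uses $\AX$ to enforce exact distances to the $k$ chosen vertices while $\AF$ ensures that all of them are visited. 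In each reduction the formula decomposes into $k$ nearly independent gadgets sharing only a small common frame, which keeps the pathwidth of $\mathcal A_{\varphi_{G,k}}$ bounded by $f(k)$, and the nesting of temporal operators can be arranged to be constant or $O(\log k)$, so $\td(\varphi_{G,k})$ is also bounded by a function of $k$.

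The principal obstacle is simultaneously controlling both pathwidth and temporal depth in the hardness reductions: a naive encoding that repeats $k$ adjacency checks at increasing depths typically blows up $\td$, while an encoding that collapses them into a single propositional block tends to blow up pathwidth because all vertex-choice variables occur together in one large bag. The trick will be to store the chosen vertex for color $i$ using a dedicated proposition $p_i$ that occurs only locally, and to use $\AG$ / $\AU$ / $\AX\F$ to reference these choices at a uniformly bounded depth. Proving that $\pw(\mathcal A_{\varphi_{G,k}})\le f(k)$ will require exhibiting an explicit path decomposition in which each bag is the union of the $k$ constant-size gadget frames; the correctness proof then reduces to verifying, in both directions, that a clique in $G$ translates into a satisfying Kripke model and conversely that any satisfying model must select one vertex per color with pairwise adjacencies.
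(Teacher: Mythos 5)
There are two genuine gaps. First, in your treatment of $T=\{\AX\}$ you assert that a satisfiable formula has a model that is \emph{a single path} of length at most $\td(\varphi)$, and your MSO encoding accordingly guesses one chain of truth sets $U_0,\dots,U_{\td(\varphi)}$. This is false: after conversion to negation normal form the fragment contains $\EX$ as well, and a formula such as $\EX p\land\EX\lnot p$ is satisfiable but has no path model, so your linear encoding is unsound. The correct statement is a bounded-\emph{depth} (tree) model property, and the MSO formula must branch --- for each $\EX$-subformula one has to quantify a fresh set of satisfied subformulas for that successor, while re-imposing the bodies of all pending $\AX$-subformulas in every branch. This is exactly what the paper's $\theta^{i}_{\text{branchEX}}$ and $\theta^{i}_{\text{stepAX}}$ do; the repair is routine but your sketch as written does not work.

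Second, and more seriously, the hardness reductions cannot start from Multicolored Clique in the way you describe. Any reduction must encode the edge relation of an arbitrary graph $G$ into the formula, and the structural pathwidth of that encoding is governed by $G$, not by $k$; your claim that $\pw(\mathcal A_{\varphi_{G,k}})\le f(k)$ is exactly the point that would fail. Indeed, Multicolored Clique restricted to inputs of bounded pathwidth is fixed-parameter tractable (by Courcelle's theorem), so a reduction whose image has pathwidth bounded in $k$ would have to destroy the graph structure entirely, which your gadget-per-color construction does not do. The paper avoids this by reducing from p-PW-SAT, a weighted partitioned satisfiability problem that is already $\W1$-hard when parameterized by the number of parts \emph{plus the pathwidth of the input formula's structure}, so the pathwidth bound on the image is inherited from the source instance. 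The other essential idea you are missing is how to keep temporal depth constant while forcing deep models: the paper uses depth propositions $d_0,\dots,d_{n+1}$ and counter propositions under a \emph{single} outer $\AG$ (or a single $\A[\cdot\U d_{n+2}]$, or a single $\EG$ in the $\{\AX,\AF\}$ case), rather than nesting operators; your proposal to enforce ``exact distances'' with $\AX$ would blow up $\td$ to the order of $|V|$. Finally, for the $\AG$-only and $\AU$-only cases there is an additional obstacle you do not anticipate: these operators are stutter-invariant, so one cannot pin down exact counter values and must instead work with monotone counters, parity markers, and upper bounds on both the number of ones and zeros per part.
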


\begin{proof}(1.) is witnessed by \Cref{cor:AX}. The proof of (2.) is split into \Cref{lem:pw+md+AU,lem:pw+md+AG,lem:AXAF-W1}. \qed
\end{proof}

One way to prove the containment of a problem parameterized in that way in the class \FPT is to use the prominent result of Courcelle \cite[Thm.~6.3 (1)]{courcelle}. Informally, satisfiability of CTL-formulas therefore has to be formalized in monadic second order logic. The other ingredient of this approach is expressing formulas by relational structures as described before. Now the crux is that our case requires a family of MSO formulas which depend on the instance. This however seems to be a serious issue at first sight as this prohibits the application of Courcelle's theorem. Fortunately we are able to generalize Courcelle's theorem in a way to circumvent this problem. Moreover we extended it to work with infinite sized signatures under specific restrictions which allows us to state the desired \FPT result described as follows.

\subsection*{A Generalized Version of Courcelle's Theorem}\label{pg:genCourcelle}

Assume we are able to express a problem $Q$ in MSO. If instances $x\in Q$ can be modeled via some relational structure $\mathcal A_{x}$ over some finite vocabulary $\tau$ and we see $Q$ as a parameterized problem $(Q,\kappa)$ where $\kappa$ is the treewidth of $\mathcal A_{x}$ then by Courcelle's theorem we immediately obtain that $(Q,\kappa)$ is in $\FPT$ \cite{courcelle}. If we do not have a fixed MSO formula (which is independent of the instance) then we are not able to use the mentioned result. However the following theorem shows how it is possible even with infinite signatures  to apply the result of Courcelle. For this, we assume that the problem can be expressed by an infinite family $(\phi_{n})_{n\in\N}$ of MSO-formulas along with the restriction that $(\phi_{n})_{n\in\N}$ is uniform, i.e., there is a recursive 
function $f \colon n \to \phi_{n}$.

Let $\kappa$ be a parameterization. Call a function $f : \Sigma^* \to \Sigma^*$ $\kappa$\emph{-bounded} if there is a computable
function $h$ such that for all $x$ it holds that $\size{f(x)} \leq h(\kappa(x))$. 

\begin{theorem}\label{thm:genCourcelle}
 Let $(Q,\kappa)$ be a parameterized problem such that instances $x\in\Sigma^{*}$ can be expressed via relational structures $\mathcal A_{x}$ over a (possibly infinite) signature $\tau$ and $\tw(\mathcal{A}_x)$ is $\kappa$-bounded.
If there exists a uniform MSO-formula family $(\phi_{n})_{n\in\N}$ and a fpt-computable, $\kappa$-bounded function $f$ such that for all $x\in\Sigma^{*}$ it holds $x \in Q \Leftrightarrow \mathcal A_{x}\models\phi_{\size{f(x)}}$ then $(Q,\kappa)\in\FPT$.
\end{theorem}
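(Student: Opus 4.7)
The plan is to reduce to the classical Courcelle theorem by showing that, even though the signature $\tau$ may be infinite and the defining formula depends on the instance, both the relevant subvocabulary and the formula in question are bounded in size by a computable function of the parameter $\kappa(x)$ alone.

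Concretely, given an input $x$, I would first compute $f(x)$ in fpt time, which is possible by assumption. Because $f$ is $\kappa$-bounded, the integer $n\dfn|f(x)|$ satisfies $n\le h(\kappa(x))$ for some computable $h$. Using the recursive enumeration $m\mapsto\phi_m$ of the uniform family, I then compute the MSO-formula $\phi_n$ explicitly. Its size is a computable function of $n$ and hence of $\kappa(x)$ alone, so in particular only finitely many relation symbols of $\tau$ can appear in $\phi_n$; collect these into a finite subvocabulary $\tau_n\subseteq\tau$ whose cardinality is again bounded by a computable function of $\kappa(x)$.

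Next I would pass from $\mathcal A_x$ to its reduct $\mathcal A_x\!\restriction\!\tau_n$, obtained by discarding all relations whose symbols do not occur in $\phi_n$. Since $\phi_n$ is over $\tau_n$, the equivalence
\[
x\in Q \iff \mathcal A_x\models\phi_n \iff \mathcal A_x\!\restriction\!\tau_n\models\phi_n
\]
still holds. Dropping relations cannot increase treewidth, so $\tw(\mathcal A_x\!\restriction\!\tau_n)\le \tw(\mathcal A_x)$, which remains $\kappa$-bounded by hypothesis. At this point I am exactly in the setting of the classical Courcelle theorem: a finite relational structure of bounded treewidth and an MSO-formula of bounded size. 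Applying \cite{courcelle} yields a deterministic algorithm running in time $g(|\phi_n|,\tw(\mathcal A_x\!\restriction\!\tau_n))\cdot|\mathcal A_x\!\restriction\!\tau_n|^{O(1)}$, and since both arguments of $g$ are bounded by computable functions of $\kappa(x)$, this collapses to the desired $F(\kappa(x))\cdot|x|^{O(1)}$ bound.

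The only delicate point I expect is bookkeeping rather than a genuine obstacle: one has to verify that computing $\phi_n$, extracting $\tau_n$, and producing the reduct $\mathcal A_x\!\restriction\!\tau_n$ can all be done within the fpt budget, which follows from the uniformity assumption together with the fpt-computability and $\kappa$-boundedness of $f$. The crucial conceptual move is recognising that the apparent obstruction posed by an instance-dependent formula over an infinite signature dissolves as soon as the \emph{index} $|f(x)|$ is itself bounded by $\kappa$, because then every quantity the original Courcelle theorem is sensitive to is absorbed into the fpt dependence on $\kappa(x)$.
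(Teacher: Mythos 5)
Your proposal is correct and follows essentially the same route as the paper's proof: compute the index $|f(x)|$, generate $\phi_{|f(x)|}$ from the uniform family within the fpt budget, and invoke the classical Courcelle theorem, with every quantity it depends on bounded by a computable function of $\kappa(x)$. Your explicit restriction to the finite subvocabulary $\tau_n$ and the reduct $\mathcal A_x\!\restriction\!\tau_n$ is a welcome sharpening of a point the paper only addresses informally in the remark following the theorem.
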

\begin{proof}
Let $(Q,\kappa)$, $(\phi_{n})_{n\in\N}$, $\kappa$ and $f$ be given as in the conditions of the theorem. Let
$(\phi_{n})_{n\in\N}$ be computed by a w.l.o.g.\ non-decreasing and computable function $g$.
The following algorithm correctly decides $Q$ in fpt-time w.r.t. $\kappa$.
First compute $i := \size{f(x)}$ in \FPT for the given instance $x$. Since $(\phi_{n})_{n\in\N}$ is uniform and $f$ is $\kappa$-bounded we can construct $\phi_i$ in time $g(n) = g(|f(x)|) \leq g(h(\kappa(x)))$ for recursive $g$, hence in \FPT.
Now we are able to solve the model checking problem instance $(\mathcal{A}_x, \phi_i)$ in time $f'\big(\tw(\mathcal A_{x}), \size{\phi_i}\big) \cdot \size{\mathcal{A}_x}$ for a recursive $f'$ due to Courcelle's theorem. As both $\tw$ and $\size{\phi_i}$ are $\kappa$-bounded, the given algorithm then runs in \FPT time.
 \qed
\end{proof}


\begin{figure}
 \centering
 \pgfmathsetseed{1342}
\begin{tikzpicture}[]
 \draw[black,fill=white] (0,0) -- (6,0) -- (6,5) -- (0,5) -- cycle;
 \node at (-0.5,4.8) {$\Sigma^{*}$};
 
 \path[draw,fill=lightgray] (5,2.5) arc (0:360:2);

 \draw[black] (3,2.5) decorate [kringel] {-- (2,5)};
 \draw[black] (3,2.5) decorate [kringel] {-- (5,5)};
 \draw[black] (3,2.5) decorate [kringel] {-- (6,2.5)};
 \draw[black] (3,2.5) decorate [kringel] {-- (6,0)};

\path[postaction={decorate,decoration={text along path,text align=center,text={...}}}] (3,1.5) to [bend right=45]  (3.6,2);

 \node at (3.5,4.7) {$\phi_{1}$};
 \node at (5.2,4) {$\phi_{2}$};
 \node at (5.4,1.6) {$\phi_{3}$};

 \node at (1,4.2) {$(Q,\kappa)$};

 \node (q1) at (3.2,3.9) {$(Q,\kappa)_{1}$};
 \node[rotate=15] (q2) at (4.1,3) {$(Q,\kappa)_{2}$};
 \node[rotate=-10] (q3) at (4.2,2.1) {$(Q,\kappa)_{3}$};
 
 \node[fill=white,text width=2.7cm] (app) at (-0.5,1) {Apply Courcelle's theorem};
 
 \path[->, thick, draw=black,shorten >=-1mm] 
 (app) edge [bend left] node {} (q1)
 edge [bend left] node {} (q2)
 edge [bend right] node {} (q3);
\end{tikzpicture}
\caption{Visualization of the infinite application of Courcelle's theorem in \Cref{thm:genCourcelle}. $(Q,\kappa)_i$ for $i\in\N$ are the slices of the parameterized problem, i.e., $(Q,\kappa)_i\dfn\{x\in\Sigma^*\mid x\in Q\text{ and } \kappa(x)=i\}$.}\label{fig:genCourcelle}
\end{figure}

Note that the infinitely sized signature is required to describe the structures from the set of all structures $\mathfrak A$ which occur with respect to the corresponding \emph{family} of MSO-formulas $(\phi_n)_{n\in\N}$. For every subset $T\subset\mathfrak A$ of structures with respect to each $\phi_i$ then have (as desired and required by Courcelle's theorem) a finite signature, i.e., a vocabulary.\medskip

Praveen \cite{prav14} shows the fixed-parameter tractability of $\MLSAT$ (parameterized by pathwidth and modal depth) by applying Courcelle's theorem, using for each modal formula an MSO-formula whose length is linear in the modal depth. This can be seen as a special case of \Cref{thm:genCourcelle} using a \P-uniform MSO family that partitions the instance set according to the modal depth.\bigskip


Again we want to stress that formula pathwidth of $\varphi$ refers to the pathwidth of the corresponding structures $\mathcal A_\varphi$ as defined above. 

\begin{lemma}\label{lem:AX}
Let $\varphi\in\CTLNNF(\{\AX,\EX\},B)$ given by the structure $\mathcal A_{\varphi}$ over $\tau$. Then there exists an MSO formula $\theta(\varphi)$ such that $\varphi\in\CTLSAT(\{\AX\})$ iff $\mathcal A_\varphi\models\theta(\varphi)$ and $\theta(\varphi)$ depends only on $\td(\varphi)$.
\end{lemma}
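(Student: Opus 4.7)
The plan is to construct $\theta(\varphi)$ as a Hintikka-style tableau search of depth $d \dfn \td(\varphi)$, encoded as a recursive MSO formula whose syntactic shape depends only on $d$. Since the NNF fragment over $\{\AX,\EX\}$ enjoys the tree-model property up to depth $d$, satisfiability of $\varphi$ reduces to finding a root ``type'' $T_0$ (a subset of the universe of $\mathcal A_\varphi$) containing the representative of $\varphi$, and recursively, for every $\EX$-demand in a type $T$, a successor type $T'$ that contains the $\EX$-body of that demand together with all $\AX$-bodies of $T$, plus one further successor type needed to enforce totality of the Kripke accessibility relation.

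I would carry out the construction in three steps. First, I express propositional type consistency as a first-order sentence $\mathrm{PropConsist}(T)$ over $\tau$: for every $x$, the membership $x \in T$ agrees with the Boolean evaluation of $x$ tracked through $\conn_{\land,i}, \conn_{\lor,i}, \conn_{\lnot,1}, \const_{\true}, \const_{\false}$, while elements labelled by $\var$, $\repr_\AX$, or $\repr_\EX$ are treated as free atoms. Second, I define a family $\mathrm{Sat}_i(T)$ inductively on $i \le d$, with $\mathrm{Sat}_0(T) \dfn \mathrm{PropConsist}(T) \land \forall y\,(y \in T \to \reprPL(y))$ and
\begin{align*}
\mathrm{Sat}_i(T) \dfn{}& \mathrm{PropConsist}(T) \\
&{}\land \forall y\,\bigl(y \in T \land \repr_\EX(y) \to \exists T'\,(\mathrm{SuccEX}(T,T',y) \land \mathrm{Sat}_{i-1}(T'))\bigr) \\
&{}\land \exists T''\,\bigl(\mathrm{SuccAX}(T,T'') \land \mathrm{Sat}_{i-1}(T'')\bigr),
\end{align*}
where $\mathrm{SuccEX}(T,T',y)$ is the first-order condition ``$T'$ contains the $\EX$-body of $y$ (via $\body_\EX$) and also contains the $\AX$-body of every element $z \in T$ with $\repr_\AX(z)$ (via $\body_\AX$)'', and $\mathrm{SuccAX}(T,T'')$ drops the first clause, serving only to witness totality. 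Finally I set $\theta(\varphi) \dfn \exists r\,(\repr(r) \land \exists T_0\,(r \in T_0 \land \mathrm{Sat}_d(T_0)))$. Since each $\mathrm{Sat}_i$ unfolds into a template depending only on $i$, the whole formula depends only on $d$.

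Correctness is by induction on the recursion depth. For the ``only if''-direction, from any Kripke model of $\varphi$ at some $w_0$ I unfold a tree of depth $d$, let $T_i$ at each level-$i$ vertex be the Hintikka-minimal set of subformulas of $\varphi$ true there, and verify that the $\mathrm{Sat}_i$-clauses hold; at level $d$ only propositional subformulas of $\varphi$ remain as demands, so the $\reprPL$-restriction in $\mathrm{Sat}_0$ is satisfied. Conversely, from MSO-witness sets $(T_0,T_1,\ldots)$ I construct a tree-shaped Kripke structure whose worlds are labelled by these types, propositional valuations read off from the $\var$-elements of each $T_i$, and I show by induction on $\td(\psi) \le i$ that every $\psi \in T_i$ is satisfied at the corresponding world; the model is then extended beyond depth $d$ by self-loops on any valuation-consistent sink to yield a total Kripke structure.

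The main subtlety will be keeping the base case $\mathrm{Sat}_0$ sound against the totality requirement of CTL: one cannot terminate the recursion naively with propositional consistency alone without losing soundness, since a lingering $\AX$- or $\EX$-element at a leaf would impose an unchecked demand on an as-yet-undefined successor. The way out is the $\reprPL$-restriction in $\mathrm{Sat}_0$, which, by straightforward bookkeeping along the recursion, captures the fact that in a Hintikka-minimal unfolding only propositional subformulas of $\varphi$ can still be demanded at depth $d$; this makes the choice of model extension beyond depth $d$ irrelevant to the satisfaction of $\varphi$ and absorbs the totality constraint into a trivial self-loop at the leaves.
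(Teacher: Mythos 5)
Your construction is essentially the paper's: both rest on the depth-$\td(\varphi)$ tree-model property and build a uniform MSO family with one existentially quantified set of ``currently demanded subformulas'' per level, an $\EX$-branching clause that carries along all $\AX$-bodies, an extra successor to discharge $\AX$-demands in the absence of $\EX$-demands, and a base case admitting only $\reprPL$-elements. The one substantive deviation is in $\mathrm{PropConsist}$: you require that membership in $T$ \emph{agree} with the Boolean evaluation at every connective node, i.e., a biconditional, whereas the paper's $\theta^{i}_{\text{assign}}$ imposes only the downward (Hintikka-saturation) direction, relativized to elements already in the set. The biconditional breaks completeness at the leaves: for a subformula such as $\true\lor\EX q$ (or $(p\lor\lnot p)\lor\EX q$), the upward direction forces the disjunction into $T_d$ at level $0$ because its left disjunct is forced in, yet that disjunction is not $\reprPL$, so your $\mathrm{Sat}_0$ becomes unsatisfiable and $\theta(\varphi)$ rejects a satisfiable input. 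Note also that your own completeness witnesses --- Hintikka-\emph{minimal} sets --- satisfy only the downward direction, so they need not satisfy the biconditional you wrote; the argument is internally inconsistent at this point.

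The fix is exactly what the paper does: state propositional consistency as implications guarded by membership in $T$ (if a conjunction is in $T$ then both conjuncts are, if a disjunction is in $T$ then some disjunct is, constants and negated variables as expected), which is all that the soundness induction ever uses. With that change your proof goes through. The remaining differences are immaterial: you always add the totality successor rather than only when no $\EX$-demand is present, and you omit the paper's well-formedness formula $\theta_{\text{struc}}$, which is redundant under the lemma's assumption that $\mathcal A_\varphi$ is the encoding of $\varphi$.
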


\begin{proof}
The first step is to show that a formula $\varphi \in \CTLNNF(\{\AX,\EX\})$ is satisfiable if and only if it is satisfied by a Kripke structure of depth $\td(\varphi)$, where the depth of a structure $(M, w_0)$ is the maximal distance in $M$ from $w_0$ to another state from $M$. This can be similar proven as the \emph{tree model property} of modal logic \cite[p. 269, Lemma~35]{blrive01}.

Let $\varphi$ be the given formula in $\CTLNNF(\{\AX,\EX\})$. The following formula $\theta_{\textit{struc}}$ describes the properties of the structure $\mathcal A_{\varphi}$. At first it takes care of the uniqueness of the formula representative. If an element $x$ does not represent a formula then it has to be a subformula. Additionally if $x$ it is not a variable it has to be either a constant, or a Boolean function $f\in B$ with the corresponding arity $\arity f$, or an $\AX$-, or an $\EX$-formula respectively. Furthermore the distinctness of the representatives has to be ensured which together with the previous constraints implies acyclicity.

In the following $f_1(u,v,w,x)$ corresponds to the operator of the function which is true if exactly one of its arguments is true.
{\scriptsize
\begin{align*}
 \theta_{\text{struc}} \dfn & 
 \forall x\forall y(\repr(x) \land \repr(y) \to x=y)\land\\
&\forall x \Bigg( \neg \repr(x) \to \exists y \Big( \neg \var(y) \land  \bigvee_{\stackrel{f \in \{\land,\lor,\lnot\},}{1\leq i\leq\arity{f}}} \conn_{f,i}(x,y)\Big)\Bigg)\land \\ 
& \forall x\, f_1\Bigg( \var(x), \bigvee_{f \in \{\true,\false\}} \const_{f}(x), \\
&\quad\qquad \bigvee_{\stackrel{f \in B,}{\arity{f} \geq 1}} \bigwedge_{1\leq i\leq\arity{f}} \exists y  \big( \conn_{f,i}(y,x)\land \forall z \big( \conn_{f,i}(z,x) \to z=y \big) \big),\\
&\quad\qquad \exists y \big( \bodyAX(y,x) \land \forall z \big( \bodyAX (z,x) \to z=y \big) \big),\\
&\quad\qquad \exists y \big( \bodyEX(y,x) \land \forall z \big( \bodyEX (z,x) \to z=y \big) \big)\Bigg) \land\\
&\forall x\forall y \big((\bodyAX(y,x)\to\reprAX(x))\land(\bodyEX(y,x)\to\reprEX(x))\big).
\end{align*}
}
The previous formula is a modification of the formula used in the proof of Lemma~1 in \cite{mstv12}.

The next formulas will quantify sets $M_i$ which represent sets of satisfied subformulas at worlds in the Kripke structure at depth $i$. Here the formulas with propositional connectives, resp., all constants,  have a valid assignment obeying their function value in the model $M_i$. The $\AX$- and $\EX$-formulas are processed as expected: the $\EX$-formulas branch to different worlds and the $\AX$-formulas have to hold in all possible next worlds. Now we are ready to define $\theta_{\text{assign}}^{i}$ in an inductive way. At depth $0$ we want to consider only propositional formulas. Here it ensures that all Boolean functions obey the model:
{\scriptsize
\begin{align*}
\theta^{0}_{\text{assign}}(M_{0}) \dfn \forall x, y_{1},&\dots, y_{n}\in M_{0}:\reprPL(x)\land\\
 \bigwedge_{f\in B}\Biggl(\quad&\bigwedge_{\mathclap{\arity{f}=0}}\const_{f}(x)\to f\land
 \bigwedge_{\mathclap{1\leq i\leq\arity{f}}}\conn_{f,i}(y_{i},x)\to f(M_{0}(y_{1}),\dots,M_{0}(y_{\arity{f}}))\Biggr).
\end{align*}
}
In the general definition of $\theta_{\textit{assign}}^{i}$ we utilize for convenience two subformulas, $\theta^{i}_{\text{branchEX}}$ and $\theta^{i}_{\text{stepAX}}$. The first is defined for an element $x$ representing an $\EX$-formula, a set of elements $M_{i}$ representing to be satisfied formulas, and a set of elements $M_{\AX}$ representing the $\AX$-formulas which are satisfied in the current world. The formula enforces that the formula $\EX\psi$ represented by $x$ has to hold in the next world together with all bodies of the $\AX$-formulas:
{\scriptsize
\begin{align*}
 \theta^{i}_{\text{branchEX}}(M_{i},M_{\AX},x) \dfn \exists y\Big(&\bodyEX(y,x)\land\\
 &\exists M_{i-1}\big(M_{i-1}(y) \land \forall z\in M_{\AX}(\exists w \, \bodyAX(w, z) \land M_{i-1}(w))\land\\
 &\phantom{\exists M_{i-1}\big(}\theta^{i-1}_{\text{assign}}(M_{i-1})\big)\Big).
\end{align*}
}
The second formula is crucial when there are no $\EX$-formulas represented in $M_{i}$. Then the $\AX$-formulas still have to be satisfied eventually wherefore we proceed with a single next world (without any branching required):
{\scriptsize
\begin{align*}
 \theta^{i}_{\text{stepAX}}(M_{\AX}) \dfn \exists M_{i-1}\forall z\in M_{\AX}(\exists w \, \bodyAX(w, z) \land M_{i-1}(w))\land\theta^{i-1}_{\text{assign}}(M_{i-1}).
\end{align*}
}%
Now we turn towards the complete inductive definition step where we need to differentiate between the two possible cases for representatives: either a propositional or a temporal formula is represented. The first part is similar to the induction start and the latter follows the observation that for every $\EX$-preceded formula we want to branch. In each such branch all not yet satisfied $\AX$-preceded formulas have to hold. The set $M_{\AX}$ contains all $\AX$-formulas which are satisfied in the current world. 
If we do not have any $\EX$-formulas then we enforce a single next world for the remaining $\AX$-formulas.
{\scriptsize
\begin{align*}
 \theta^i_{\text{assign}}(M_{i}) \dfn &\forall x,y_{1},\dots, y_{n}\in M_{i}\\ 
 &\bigwedge_{f\in B}\Biggl(\quad\bigwedge_{\mathclap{\arity{f}=0}}\const_{f}(x)\to (M_{i}(x)\leftrightarrow f)\land\\\displaybreak
 &\qquad\quad\;\,\bigwedge_{\mathclap{1\leq i\leq\arity{f}}}\conn_{f,i}(y_{i},x)\to\big(M_{i}(x)\leftrightarrow f(M_{i}(y_{1}),\dots, M_{i}(y_{\arity{f}}))\big)\Bigg)\land\\
 &\exists M_{\AX}\subseteq M_{i}\Big(\forall x\Big(M_{\AX}(x) \leftrightarrow \big(\reprAX(x)\land M_{i}(x) \big)\Big)\land\\
 &\phantom{\exists M_{\AX}\subseteq M_{i}\Big(}\forall x\in M_{i}\big(\reprEX(x)\to\theta^{i}_{\text{branchEX}}(M_{i},M_{\AX},x)\big)\land\\
 &\phantom{\exists M_{\AX}\subseteq M_{i}\Big(}\big(\forall x\in M_{i}(\lnot\reprEX(x))\big)\to \theta^{i}_{\text{stepAX}}(M_{\AF}\big)
 \Big)
\end{align*}}

Through the construction we get that $\varphi$ is satisfiable iff 
$\mathcal A_{\varphi}\models\theta_{\text{struc}}\land\exists M(\theta^{\td(\varphi)}_{\text{assign}}(M))\nfd\theta(\varphi)$. \qed
\end{proof}

\begin{corollary}\label{cor:AX}
 $\CTLSAT(\{\AX\})$ parameterized by formula pathwidth and temporal depth is fixed-parameter tractable.
\end{corollary}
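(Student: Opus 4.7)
The plan is to derive the corollary as a direct application of the generalized Courcelle theorem (\Cref{thm:genCourcelle}) using the MSO family produced by \Cref{lem:AX}. First, I would negation-normalize the input: any $\varphi \in \CTL(\{\AX\})$ can be rewritten in linear time into an equivalent formula in $\CTLNNF(\{\AX,\EX\})$ by pushing negations inward, which preserves temporal depth and increases pathwidth by at most a constant factor (the local rewrites touch only neighborhoods of bounded radius in $\mathcal A_\varphi$), so the parameter $\kappa(\varphi) = \pw(\mathcal A_\varphi) + \td(\varphi)$ is preserved up to an $\FPT$-computable bound.

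Next, I would invoke \Cref{lem:AX} to obtain, for each $n \in \N$, an MSO formula $\phi_n \dfn \theta_{\text{struc}} \land \exists M\,\theta^{n}_{\text{assign}}(M)$ such that for every $\varphi \in \CTLNNF(\{\AX,\EX\})$ with $\td(\varphi) = n$ we have $\varphi \in \CTLSAT(\{\AX\})$ iff $\mathcal A_\varphi \models \phi_n$. The family $(\phi_n)_{n \in \N}$ is uniform in the sense required by \Cref{thm:genCourcelle}: the inductive construction of $\theta^{i}_{\text{assign}}$, $\theta^{i}_{\text{branchEX}}$, and $\theta^{i}_{\text{stepAX}}$ in the proof of \Cref{lem:AX} is explicit and terminates, so $n \mapsto \phi_n$ is computed by a recursive (in fact primitive recursive) function.

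To fit the template of \Cref{thm:genCourcelle}, I would take $f(\varphi) \dfn 1^{\td(\varphi)}$ so that $\size{f(\varphi)} = \td(\varphi) \leq \kappa(\varphi)$; then $f$ is trivially $\kappa$-bounded and $\FPT$-computable, and the equivalence $\varphi \in \CTLSAT(\{\AX\}) \Leftrightarrow \mathcal A_\varphi \models \phi_{\size{f(\varphi)}}$ is exactly the statement of \Cref{lem:AX}. Since $\tw(\mathcal A_\varphi) \leq \pw(\mathcal A_\varphi) \leq \kappa(\varphi)$, the treewidth is $\kappa$-bounded as well, and \Cref{thm:genCourcelle} delivers the desired $\FPT$ algorithm.

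The only nontrivial bookkeeping is checking that the pathwidth of $\mathcal A_\varphi$ survives the NNF rewriting with at most a constant-factor blow-up; the rest is just assembling the pieces. No new combinatorial ideas are needed beyond what is already established in \Cref{lem:AX} and \Cref{thm:genCourcelle}, so I would keep the proof to a couple of sentences that explicitly exhibit the family $(\phi_n)$ and the function $f$ and then cite \Cref{thm:genCourcelle}.
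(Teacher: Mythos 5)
Your proposal follows exactly the same route as the paper's proof: normalize $\varphi$ to NNF in linear time, take $f$ with $\size{f(\varphi)}=\td(\varphi)$ so that it is $\kappa$-bounded, use the uniform MSO family from \Cref{lem:AX}, and apply \Cref{thm:genCourcelle} via the bound $\tw(\mathcal A_\varphi)\leq\pw(\mathcal A_\varphi)$. Your version is in fact slightly more careful than the paper's, since you explicitly note that the NNF rewriting preserves the parameter up to an \FPT-computable bound.
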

\begin{proof}
Assume that the given formula $\varphi$ is in NNF since such a transformation is possible in linear time.
As pathwidth is an upper bound for treewidth, we apply \Cref{thm:genCourcelle} in the following way. For $\size{f(\varphi)} = \td(\varphi)$ the function $f$ is $\kappa$-bounded and computes
the appropriate MSO formula from the uniform family given by \Cref{lem:AX}.
\qed
\end{proof}

\subsection*{Intractable fragments of CTL-SAT}\label{pg:intracCTL}
In the following section we consider fragments of CTL for which their models cannot be bounded by the temporal depth of the formula. Therefore the framework used for the $\AX$ case cannot be applied. Instead we prove $\W{1}$-hardness.

\begin{lemma}\label{lem:AXAG-W1}
  $\CTLSAT(T)$ parameterized by formula pathwidth and temporal depth is $\W1$-hard if $\{\AX,\AG\} \subseteq T$.
\end{lemma}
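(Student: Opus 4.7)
The plan is to give an fpt-reduction from \textsc{Multicolored Clique} (MCC), one of the standard $\W1$-hard problems, parameterised by the number of colours $k$. An MCC instance consists of a graph $G=(V,E)$ together with a partition $V=V_1\cup\dots\cup V_k$, and the question is whether $G$ contains a clique with exactly one vertex per colour class.

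From such an instance I would construct a formula $\varphi_{G,k}\in\CTL(\{\AX,\AG\})$ of size polynomial in $|G|$, of constant temporal depth, and whose pathwidth $\pw(\mathcal A_{\varphi_{G,k}})$ is bounded by a function of $k$ alone. The construction uses that inside $\CTL(\{\AX,\AG\})$ the existential operators are expressible by negation ($\EX\phi\equiv\lnot\AX\lnot\phi$, $\EF\phi\equiv\lnot\AG\lnot\phi$), so $\EX,\EF,\AX,\AG$ are all freely available. First I would impose a shallow Kripke skeleton at the root forcing $k$ colour-witness worlds to exist via $\bigwedge_{i=1}^k\EX c_i$, where each $c_i$ is a colour proposition; each witness is then committed, through a bounded-depth $\AX$-gadget, to a specific vertex of $V_i$. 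For every unordered pair $\{i,j\}$ of colours I would add one $\AG$-guarded clause of the shape
\[
\AG\Bigl(q_{i,j}\to\bigvee_{(u,v)\in E,\,u\in V_i,\,v\in V_j}\xi_{u,v}\Bigr),
\]
where $q_{i,j}$ marks pair-test worlds and $\xi_{u,v}$ locally verifies that colour $i$ committed to $u$ and colour $j$ to $v$; the top-level $\varphi_{G,k}$ is the conjunction of the skeleton with these $\binom{k}{2}$ pair-clauses.

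The main obstacle will be bounding $\pw(\mathcal A_{\varphi_{G,k}})$ purely in terms of $k$. A direct encoding using one propositional letter per vertex of $G$ would introduce $\Theta(|V|)$ propositions that recur across the $\binom{k}{2}$ edge-disjunctions and would therefore remain alive in every bag of a path decomposition sweeping over the pair-clauses, inflating the width to $\Theta(|V|)$. My plan is to use only $O(k^2)$ propositions globally---the colour indicators $c_i$, the pair markers $q_{i,j}$, and a handful of role tags such as \emph{selected} or \emph{left}/\emph{right}---and to encode the identity of the chosen vertex of each colour \emph{structurally}, in the shape of the $\AX$-reachable subtree attached below the corresponding colour witness, so that the large edge-disjunction contributes many subformula nodes to $\mathcal A_{\varphi_{G,k}}$ but no additional propositions. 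A path decomposition that processes the pair-clauses one $\{i,j\}$ at a time while keeping the $O(k^2)$ marker propositions permanently in every bag then yields width $f(k)$ for some computable $f$.

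Correctness is then routine: from any model of $\varphi_{G,k}$ the $k$ colour witnesses together with their committed vertices form a multicoloured $k$-clique because the $\AG$-guards enforce all $\binom{k}{2}$ edges, and conversely a multicoloured clique in $G$ instantiates an explicit Kripke model of $\varphi_{G,k}$. The reduction is polynomial-time computable, the temporal depth is a constant, and the pathwidth is $f(k)$, so both components of the parameter are fpt-bounded. As the constructed formula also lies in $\CTL(T)$ for every $T\supseteq\{\AX,\AG\}$, this establishes $\W1$-hardness of $\CTLSAT(T)$ for all such $T$. \qed
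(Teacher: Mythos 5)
There is a genuine gap at exactly the point you yourself flag as ``the main obstacle'': the commitment of each colour class to one of $|V_i|$ vertices cannot be both \emph{verifiable by the pair-clauses} and \emph{parameter-bounded}. If you use one proposition per vertex, every bag of any path decomposition that covers the $\tbinom{k}{2}$ edge-disjunctions must keep $\Theta(|V|)$ proposition nodes alive, as you note; if you use a binary encoding with $\log |V|$ bit-propositions, the same connectivity argument forces width $\Omega(\log |V|)$, which is still not bounded by any $f(k)$. The proposed escape---encoding the chosen vertex ``structurally, in the shape of the $\AX$-reachable subtree'' using only $O(k^2)$ propositions and constant temporal depth---cannot work for a counting reason: a $\CTL$ formula of temporal depth $d$ over $c$ propositions cannot distinguish worlds that are $d$-bisimilar, and the number of depth-$d$ bisimulation types over $c$ propositions is a fixed finite function of $d$ and $c$. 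Hence for $|V_i|$ larger than that bound, two distinct vertices of the same colour necessarily receive the same ``shape'', the formulas $\xi_{u,v}$ cannot tell them apart, and the reduction admits false positives (a non-edge pair satisfying some $\xi_{u',v'}$). Letting the depth or the proposition count grow with $|V|$ to regain enough types destroys the bound on the target parameter, so the reduction as planned cannot be an fpt-reduction from \textsc{Multicolored Clique} parameterised by $k$ alone.

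This is precisely why the paper does not reduce from a problem parameterised only by $k$. It instead reduces from p-PW-SAT, whose input is a propositional formula $\mathcal F$ with a partition of its variables into $k$ parts and per-part targets, and whose parameterisation \emph{already includes the pathwidth of the structural representation of $\mathcal F$}. The constructed $\CTL(\{\AX,\AG\})$ formula then only has to thread a chain of worlds $w_0 R w_1 R \cdots R w_{n+1}$ with determined variable values, unary counters $tr_p^j, fl_p^j$ incremented via $\AX$ under a top-level $\AG$, and a target check at depth $n+1$; the bounded pathwidth of the output is inherited from the bounded pathwidth of $\mathcal A_{\mathcal F}$ together with the locally-sweeping index structure of the counter clauses, rather than manufactured from scratch. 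To repair your proof you would either need to switch to such a source problem or supply a genuinely new mechanism for identifying $n$ objects within an $f(k)$-width, $f(k)$-depth formula, which the bisimulation-type bound above rules out in the form you describe.
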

\begin{proof}
 We will modify the construction in the proof of Praveen \cite[Lemma A.3]{prav14} and thereby state an fpt-reduction from the parameterized problem p-PW-SAT whose input is ($\mathcal{F}$, $\textit{part}: \Phi \to [k]$, $\textit{tg}: [k] \to \mathbb{N}$), where $\mathcal F$ is a propositional CNF formula, $\textit{part}$ is a function that partitions the set of propositional variables of $\mathcal F$ into $k$ parts, and $\textit{tg}$ is a function which maps to each part a natural number. The task is to find a satisfying assignment of $\mathcal F$ such that in each part $p\in[k]$ exactly $\textit{tg}(p)$ variables are set to true. 
 A generalization of this problem to arbitrary formulas $\mathcal F$ (i.e., the CNF constraint is dropped) is $\W1$-hard when parameterized by $k$ and the pathwidth of the structural representation $\mathcal A_{\mathcal F}$ of $\mathcal F$ which is similar proven as in \cite[Lemma 7.1]{prav14}. 
 
 The further idea is to construct a $\CTL$-formula $\phi_{\mathcal{F}}$ in which we are able to verify the required targets. The formula enforces a Kripke structure $K=(W,R,V)$ where in each world $w\in W$ the value of $V(w)$ coincides with a satisfying assignment $f$ of $\mathcal{F}$ together with the required targets. Each such $K$ contains as a substructure a chain $w_{0} R w_{1} R \cdots R w_{n}$ of worlds and all variables $q_{i}$ in $\mathcal{F}$ are labeled to each $w_{j}$ if $f(q_{i})$ holds.

Let $q_{1}, \dots, q_{n}$ be all the propositional variables in $\mathcal{F}$. Then $t_{\uparrow1}, \dots , t_{\uparrow k},$ respectively, $f_{\uparrow1}, \dots, f_{\uparrow k}$ are propositions to distinguish the parts, $tr_{p}^{0}, \dots,tr_{p}^{n[p]},$ respectively, $fl_{p}^{0}, \dots, fl_{p}^{n[p]}$ for $p \in [k],$ are counter propositions for the number of variables set to true and false, $d_{0}, \dots, d_{n+1}$ are depth propositions, and $\Phi(p)$ denotes the set of variables in part $p\in[k]$.

The formula $\phi_{\mathcal F}$ that is the conjunction of subformulas (\Cref{fig:ax-ag}) similar to \cite[Lemma A.3]{prav14} states the reduction from p-PW-SAT to $\CTLSAT(\{\AX,\AG\})$ parameterized by temporal depth and pathwidth.

\begin{figure}[t]
{\scriptsize
 \begin{align*}
 \textit{determined} \dfn & \AG\bigwedge_{i=1}^{n}\Big((q_i\Rightarrow\AX q_{i})\land (\neg q_i\Rightarrow\AX \neg q_{i})\Big)\\
 \textit{depth} \dfn & \bigwedge_{i=0}^{n}\Big((d_{i}\land \neg d_{i+1})\Rightarrow \AX(d_{i+1} \land \neg d_{i+2})\Big)\\
 \textit{setCounter} \dfn & \AG\bigwedge_{i=1}^{n}\Big((d_{i}\land \neg d_{i+1})\Rightarrow \left[(q_{i}\Rightarrow t_{\uparrow part(i)})\land (\neg q_{i}\Rightarrow f_{\uparrow part(i)})\right]\Big)\\
 \textit{incCounter} \dfn & \AG\bigwedge_{p=1}^{k}\bigwedge_{j=0}^{n[p]}\Big[ \Big(t_{\uparrow p} \Rightarrow \left(tr^{j}_{p} \Rightarrow \AX tr^{j+1}_{p}\right)\Big) \land \Big(f_{\uparrow p} \Rightarrow\left(fl^{j}_{p} \Rightarrow \AX fl^{j+1}_{p}\right)\Big)\Big] \\
 \textit{targetMet} \dfn & \AG\bigwedge_{p=1}^k (d_{n+1}\Rightarrow (tr_p^{tg(p)}\land\lnot tr_p^{tg(p)+1}\land fl_p^{n[p]-tg(p)}\land\lnot tr_p^{n[p]-tg(p)+1}))\\
 \textit{countInit} \dfn & d_0 \land \neg d_1 \land \AG \bigwedge_{p=1}^{k}\left( tr_{p}^{0} \land  fl_{p}^{0} \right)\\
 \textit{countMonotone}_1 \dfn & \AG\bigwedge_{p=1}^{k} \bigwedge_{j=0}^{n[p]} \Big[\Big( tr_{p}^{j} \Rightarrow \AG tr_{p}^{j})\Big)\land \Big( fl_{p}^{j} \Rightarrow \AG fl_{p}^{j}\Big)\Big]\\
  \textit{countMonotone}_2 \dfn & \AG\left(\bigwedge_{i=1}^{n} \Big((d_i \Rightarrow d_{i-1}) \Big)\land \bigwedge_{p=1}^{k} \bigwedge_{l=2}^{n[p]}\Big[ (tr_{p}^{j}\Rightarrow tr_{p}^{j-1}) \land (fl_{p}^{j}\Rightarrow fl_{p}^{j-1})\Big]\right)\\
\end{align*}
}
\caption{Reduction from p-PW-SAT to $\CTLSAT(\{\AX,\AG\})$\label{fig:ax-ag}}
\end{figure}

 In the following we assume the chain of worlds as explained before to be the relevant part of the model. The world where the conjunction $\phi_{\mathcal F}$ holds is assumed to be $w_{0}$. The formula \emph{determined} forces the variables $q_{i}$ not to change their value in successor levels by passing  the value of each $q_{i}$ to all next levels. Hence we get $\mathcal{M},w_{0} \models \mathcal{F}\land \textit{determined}$. \emph{depth} ensures that in the world $w_{i}$ holds $d_{i}\land \lnot d_{i+1}$, $\mathcal{M},w_{0} \models \textit{determined} \land (d_{0}\land \lnot d_{1})$ by \emph{countInit}. In the next formula \emph{setCounter} the variable $t_{\uparrow part(i)}$ holds if $q_{i}$ is set to true at the world $w_{i}$, respectively, the variable $f_{\uparrow part(i)}$ if $q_{i}$ does not hold at $w_{i}$. 
 
 Now we use the variables $t_{\uparrow part(i)}$ to increment the counter propositions $tr_{p}^{0}, \dots, tr_{p}^{n[p]}$ for all variables set to true in the formula \textit{incCounter} as follows. If the $j$ variables $\Phi(p)\cap \{q_{1}, \dots q_{j}\}$ at $\textit{part}(p)$ are set to true so is $t_{\uparrow p}$ set at the world $w_{i}$ to true and all successors of $w_{i}$ force increment of the value $\ell$ in $tr^{\ell}_{p}$, respectively, $fl^{\ell}_{p}$. This is ensured stepwise depending on the temporal depth $n$. 
 
The counters for the target function $tr_{p}^{i}$ and $fl_{p}^{i}$ are initialized by \emph{countInit}, i.e., $tr_{p}^{0}$ and $fl_{p}^{0}$ are set to true in all $w_{i}$s. Additionally the \textit{countMonotone} formulas requires the counter values to be stable and nondecreasing. The given target function $tg:[k]\to \mathbb{N}$ is then checked with the formula \textit{targetMet} such that $\mathcal{M},w_{0} \models \textit{targetMet}$, i.e., in the world at depth $n+1$ the target proposition $tr_{p}^{\textit{tg}(p)}$ must hold (and must stop, i.e., $tr_{p}^{\textit{tg}(p)+1}$ is false) for each part $p\in[k]$.  The correctness of the reduction is similarly proven as in \cite[Lemma A.3]{prav14}. \qed
\end{proof}

\begin{lemma}\label{lem:AXAF-W1}
  $\CTLSAT(T)$ parameterized by formula pathwidth and temporal depth is $\W1$-hard if $\{\AX,\AF\} \subseteq T$.
\end{lemma}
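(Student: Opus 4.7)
The plan is to reduce from the same variant of p-PW-SAT used in \Cref{lem:AXAG-W1} and to construct a $\CTL(\{\AX, \AF\})$-formula $\phi_{\mathcal F}$ whose formula pathwidth and temporal depth are bounded by a function of $k + \pw(\mathcal A_{\mathcal F})$. The overall semantic framework carries over: we enforce a model in which, along every path from the evaluation world $w_0$, a sequence of worlds $w_0, w_1, \ldots, w_{n+1}$ is visited, labeled by depth-marker propositions $d_0, \ldots, d_{n+1}$, and encoding a satisfying assignment of $\mathcal F$ together with the correct per-part true-counts.

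The key obstacle is the absence of $\AG$, which in \Cref{lem:AXAG-W1} enforced propagation of values and counter monotonicity globally using a constant-temporal-depth subformula. The trick I propose is to exploit uniqueness of depth markers: if we first force, via a constant-depth $\{\AX,\AF\}$-subformula, that each $d_i$ holds on every path at a unique position, then a formula of the form $\AF(d_i \land \chi)$ evaluated at $w_0$ forces $\chi$ at exactly the $d_i$-world on every path. In this way a globally universal statement becomes an eventuality pinned to a single world, and each occurrence of $\AG$ can be rewritten as a conjunction of such $\AF$-guarded statements of temporal depth $O(1)$.

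Concretely, the subformula $\textit{depth}$ is replaced by $d_0 \land \lnot d_1$ together with $\bigwedge_{i=0}^{n} \AF(d_i \land \AX(d_{i+1} \land \lnot d_i))$, which enforces consecutive placement of the depth markers on every path. Each of the subformulas $\textit{determined}$, $\textit{setCounter}$, $\textit{incCounter}$, $\textit{countMonotone}_1$, $\textit{countMonotone}_2$, and $\textit{targetMet}$ of \Cref{fig:ax-ag} is then rewritten as $\bigwedge_{i} \AF(d_i \land \chi_i)$, where $\chi_i$ is the constraint that was previously placed under $\AG$ at level $i$, possibly using a single $\AX$ to refer to the next level. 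Each conjunct has temporal depth $O(1)$, and so does the overall formula $\phi_{\mathcal F}$.

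The step I expect to be the main obstacle is verifying that this $\AF$-based encoding is equivalent in strength to the $\AG$-based one, despite only constraining $d_i$-worlds on the chain. Worlds off the marker chain are not directly constrained, so one must argue that such worlds are irrelevant to the satisfaction of $\phi_{\mathcal F}$, and that forcing consistency only at the $d_i$-worlds on every path suffices to extract a valid assignment with the required target counts. The pathwidth analysis is comparatively routine: the additional $\AF$-guards add only constant overhead per bag of the path decomposition used in \cite[Lemma~A.3]{prav14}, so $\pw(\mathcal A_{\phi_{\mathcal F}})$ remains bounded by a function of $k + \pw(\mathcal A_{\mathcal F})$, and correctness in both directions then follows mutatis mutandis from \Cref{lem:AXAG-W1}.
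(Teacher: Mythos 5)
Your approach diverges from the paper's and has a genuine gap at its central step. The whole construction rests on the claim that a constant-depth $\{\AX,\AF\}$-formula can force each depth marker $d_i$ to hold at a \emph{unique} position on every path, so that $\AF(d_i\land\chi_i)$ pins $\chi_i$ to ``the'' $d_i$-world. But the $\textit{depth}$ formula you give, $\bigwedge_{i}\AF(d_i\land\AX(d_{i+1}\land\lnot d_i))$, does not enforce uniqueness: $\AF$ only asserts that \emph{some} position on each path satisfies the conjunct, and nothing forbids $d_i$ from also holding at arbitrarily many other positions. Without uniqueness, distinct conjuncts $\AF(d_i\land\chi_i)$ and $\AF(d_i\land\chi_i')$ may be witnessed at \emph{different} $d_i$-positions on the same path, so the assignment read-off, the counter increments, and the target check are no longer synchronized at a common world, and the reduction breaks. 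Worse, ``$d_i$ holds at most once on every path'' is a global safety property; closing $\{\AX,\AF\}$ under negation yields only $\EX$ and $\EG$ as duals, not $\AG$ or $\EF$, so there is no evident constant-depth formula in this fragment expressing that uniqueness, and you do not supply one.

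The idea you are missing is the one the paper uses: the $\{\AX,\AG\}$ reduction of \Cref{lem:AXAG-W1} was deliberately arranged so that every $\AG$ occurs at temporal depth zero and in a conjunction, hence $\phi_{\mathcal F}\equiv\psi\land\AG\chi$ with $\psi$ propositional and $\chi\in\CTL(\{\AX\})$ after merging via $\AG\alpha\land\AG\beta\equiv\AG(\alpha\land\beta)$. One then \emph{weakens} the single $\AG$ to $\EG=\lnot\AF\lnot$, which lies in $\CTL(\{\AF\})$. Correctness is preserved because the reduction only ever needs one path of the model to form the chain of worlds carrying the assignment and the counters; an existentially quantified path suffices for that. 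So rather than trying to recover universal per-world constraints from $\AF$ (which its existential-position semantics resists), the fix is to observe that a single existential-path global constraint is already enough.
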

\begin{proof}
The formulas in the proof of \Cref{lem:AXAG-W1} are deliberately chosen to have $\AG$ operators only at temporal depth zero and in conjunctions, leading to a formula $\phi_{\mathcal F}=\psi\land\AG\chi$, where $\psi$ is purely propositional and $\chi\in\CTL(\{\AX\})$. 
As $\AG(\alpha)\land\AG(\beta)\equiv\AG(\alpha\land\beta)$ we can modify the formula $\phi_{\mathcal F}$ which is a conjunction of the formulas from above to the desired form containing only a single $\AG$. Therefore $\AG$ can be replaced by $\EG$ and the proof stays valid since there is only one instance of an existential temporal operator and it occurs at temporal depth zero, and only one path of the model is required to form a chain of worlds which express the weighted partitioned satisfiability.  \qed
\end{proof}

\begin{lemma}\label{lem:pw+md+AG}
 $\CTLSAT(T)$ parameterized by formula pathwidth and temporal depth is $\W1$-hard if $\AG\in T$.
\end{lemma}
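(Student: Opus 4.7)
The plan is to give an fpt-reduction from \textsc{p-PW-SAT} to $\CTLSAT(\{\AG\})$, which suffices because $\AG\in T$ yields $\CTLSAT(\{\AG\})\le^{\mathit{fpt}}\CTLSAT(T)$. I would adapt the construction of $\phi_{\mathcal F}$ from \Cref{lem:AXAG-W1} so as to eliminate every occurrence of $\AX$, using only $\AG$ and propositional negation (which makes $\EF\psi\equiv\lnot\AG\lnot\psi$ available for free).

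The central idea is to replace the $\AX$-based chain stepping by global assertions indexed by depth markers $d_0,\dots,d_{n+1}$, reading $d_i$ as ``depth is at least $i$''. I enforce existence $\lnot\AG\lnot(d_i\land\lnot d_{i+1})$ and monotonicity $\AG(d_{i+1}\to d_i)$ for each $i$. The subformula \emph{determined} is replaced by $\bigwedge_i(\AG q_i\lor\AG\lnot q_i)$, forcing each variable to have a constant value at all reachable worlds. \emph{setCounter} is already $\AX$-free and retained. Patterns of the shape $(d_i\land\lnot d_{i+1}\land\alpha)\Rightarrow\AX\beta$ in \emph{incCounter} are rewritten as $\lnot\AG\lnot(d_i\land\lnot d_{i+1}\land\alpha)\to\AG(d_{i+1}\land\lnot d_{i+2}\to\beta)$. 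To make these implications coherent on models that are not pure chains, I additionally assert per-depth consistency $\AG(d_i\land\lnot d_{i+1}\to c)\lor\AG(d_i\land\lnot d_{i+1}\to\lnot c)$ for each counter atom $c\in\{tr_p^j,fl_p^j\}$, collapsing counter truth values to functions of the depth class. The formulas \emph{targetMet}, \emph{countInit} and \emph{countMonotone} are adapted analogously by referring to $d_{n+1}\land\lnot d_{n+2}$ instead of stepping there via $\AX$.

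Correctness follows the pattern of \Cref{lem:AXAG-W1}. For the forward direction, an assignment meeting the target counts gives rise to the canonical chain model $w_0\to\cdots\to w_{n+1}$ with the predicted counter valuations, which validates the new formula. For the backward direction, the per-depth consistency collapses counter values to functions of depth, the cross-depth $\EF/\AG$ implications enforce the correct incrementation, and the target constraints at depth $n+1$ force the counts to equal $\mathit{tg}$. Temporal depth of the new formula is a small constant (a handful of nestings of $\AG$ and $\lnot$), and the only temporal operator used is $\AG$, so any $T$ containing $\AG$ accepts the output of the reduction.

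The main obstacle will be bounding the formula pathwidth by a function of $k$ (rather than of $n$). The new per-depth and cross-depth constraints introduce subformulas coupling depth markers with the counter atoms of a fixed part and with that part's variables. I would argue the bound by exploiting the same modular structure as in \Cref{lem:AXAG-W1} and Praveen's original construction: within a fixed part $p$, the counters $tr_p^0,\dots,tr_p^{n[p]}$ and their constraints can be arranged along a path of bags each containing only $O(1)$ consecutive counters together with the relevant depth markers, and counter atoms of different parts are not shared, so the whole formula admits a path decomposition of width bounded by a function of $k$.
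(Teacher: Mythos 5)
Your overall strategy---reduce from p-PW-SAT, simulate the chain via depth propositions, and recover $\EF$ as $\lnot\AG\lnot$---matches the paper's, and your diagnosis of the two difficulties (stutter-invariance of $\AG$/$\EF$, and the inability to say ``in the \emph{next} world'') is correct. The gap is in the step you yourself flag as the main obstacle. Your replacement for \emph{incCounter} (and likewise your per-depth consistency axioms) is indexed by \emph{pairs} of a depth index $i$ and a counter index $j$: each conjunct contains both $d_i$ and $tr_p^j$, for every $i\in[n]$ and every $j\in[n[p]]$. In the structure $\mathcal A_{\phi_{\mathcal F}}$ each such conjunct contributes a path in the Gaifman graph from the (shared) variable node $d_i$ to the (shared) variable node $tr_p^j$, and these paths are internally disjoint across distinct pairs $(i,j)$; contracting them yields a $K_{n+1,\,n[p]+1}$ minor, so $\pw(\mathcal A_{\phi_{\mathcal F}})\geq\tw(\mathcal A_{\phi_{\mathcal F}})\geq\min(n,n[p])$. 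Since $\max_p n[p]\geq n/k$, this is not bounded by any function of the parameter $k$ plus the pathwidth of $\mathcal A_{\mathcal F}$, so the map is not an fpt-reduction, and the sliding-window path decomposition you sketch cannot exist for this formula.

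The paper avoids exactly this product coupling with a parity trick: two extra propositions $m_0,m_1$ are forced (inside \emph{depth}) to alternate with the depth class, and the increment is phrased locally as $(t_{\uparrow p}\land tr^{j}_{p}\land m_b)\Rightarrow\AG(m_{1-b}\Rightarrow\AG\, tr^{j+1}_{p})$. Each counter atom then co-occurs only with its successor, with $t_{\uparrow p}/f_{\uparrow p}$, and with the two parity atoms---a constant-size neighbourhood---while each $d_i$ interacts only with $d_{i\pm1}$, $m_0,m_1$, $q_i$ and the per-part flags, keeping the pathwidth bounded in $k$. A second, smaller point: because stuttering lets redundant worlds inflate the increments, the counters can only be bounded from above, which is precisely why \emph{targetMet} must upper-bound \emph{both} the true-counter and the false-counter of each part; this mechanism should be kept verbatim rather than merely ``adapted analogously.''
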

\begin{proof}
Now we consider the case were $T=\{ \AG \}$. Both the operators $\AG$ and $\EF$ are \emph{stutter-invariant}, \ie, they cannot distinguish a path $\pi$ and a path $\pi'$ which is obtained from $\pi$ by duplicating arbitrary worlds on the path. Hence a more sophisticated construction is required to maintain correctness of the reduction.
The first problem is that even if we can enforce the worlds $w_1,\ldots,w_n$ to appear, we cannot avoid redundant intermediate worlds.
Therefore we cannot check the counter for its exact value but only bound its value from above. Instead we demand upper bounds
for both the number of ones and the number of zeros in a partition.
The second problem is that we cannot express without $\X$ that the counter has to increase in the next world (but not in the current). To circumvent this we label together with the depth propositions their ``parities''. This requires two variables and does not increase the pathwidth much. The following formulas from \Cref{lem:AXAG-W1} have to be changed:
{\scriptsize
 \begin{align*}
  \textit{determined} \dfn & \bigwedge_{i=1}^{n}\Big((q_i\Rightarrow\AG q_{i})\land (\neg q_i\Rightarrow\AG \neg q_{i})\Big)\\
 \textit{depth} \dfn & \AG \bigwedge_{i=0}^{n}\Big((d_{i}\land \neg d_{i+1})\Rightarrow (m_{i \bmod 2} \land \neg m_{1 - (i \bmod 2)} \land \EF(d_{i+1} \land \neg d_{i+2}))\Big)\\
 \textit{setCounter} \dfn & \AG\bigwedge_{i=1}^{n}\Big((d_{i}\land \neg d_{i+1})\Rightarrow \left[(q_{i}\Rightarrow t_{\uparrow part(i)})\land (\neg q_{i}\Rightarrow f_{\uparrow part(i)})\right]\Big)\\
 \textit{incCounter} \dfn & \AG\bigwedge_{p=1}^{k}\bigwedge_{j=0}^{n[p]-1}\bigwedge_{m=0}^{1}\Big[ \Big((t_{\uparrow p} \land tr^{j}_{p} \land m_i) \Rightarrow \AG\left( m_{1-i} \Rightarrow \AG tr^{j+1}_{p}\right)\Big)\\
 & \hphantom{\AG\bigwedge_{p=1}^{k}\bigwedge_{j=0}^{n[p]-1}\bigwedge_{m=0}^{1}\Big[} \land \Big(f_{\uparrow p} \land fl^{j}_{p} \land m_i \Rightarrow \AG\left(m_{1-i} \Rightarrow \AG fl^{j+1}_{p}\right)\Big)\Big]
\end{align*}}
\qed
\end{proof}

\begin{lemma}\label{lem:pw+md+AU}
 $\CTLSAT(T)$ parameterized by formula pathwidth and temporal depth is $\W1$-hard if $\AU\in T$.
\end{lemma}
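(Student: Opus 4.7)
The plan is to adapt the $\W1$-hardness reduction of \Cref{lem:pw+md+AG} so that the constructed CTL formula belongs to $\CTL(\{\AU\})$; by monotonicity of hardness in the operator set, this establishes $\W1$-hardness for every $T \ni \AU$. The key observation is that $\A[\phi\U q]$ simultaneously expresses $\AF q$ and imposes an $\AG$-style invariant on $\phi$ up to the first occurrence of $q$, so with the help of marker propositions we can recover both the global role of $\AG$ and, modulo stutter-invariance, the next-step role of $\AX$ from the earlier reductions.

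Concretely, I would introduce a fresh proposition $\mathit{end}$ together with the conjunct $\A[\true\U \mathit{end}]$, ensuring that every path eventually reaches an $\mathit{end}$-world, and then replace every occurrence of $\AG\phi$ in \Cref{fig:ax-ag} by $\A[\phi\U \mathit{end}]$. For the $\AX$-steps I would reuse the depth markers $d_0,\dots,d_{n+1}$ and the parity markers $m_0,m_1$ from \Cref{lem:pw+md+AG}, together with their monotonicity and parity-alternation constraints (now phrased via $\A[\cdot\U \mathit{end}]$), and replace each next-step assertion $\AX\psi$ by a formula of the shape $\A[d_i\U(d_{i+1}\land m_{1-(i\bmod 2)}\land \psi)]$. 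The latter acts as a stutter-robust surrogate for $\AX$ along the intended chain: the required parity flip distinguishes the genuine level-$(i{+}1)$ world from stutter copies of the level-$i$ world, and the until-prefix $d_i$ enforces that no other level is crossed in between. With these two gadgets I would rewrite \emph{determined}, \emph{depth}, \emph{setCounter}, \emph{incCounter}, \emph{targetMet}, \emph{countInit}, and \emph{countMonotone} in the fragment $\CTL(\{\AU\})$.

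The final step is to check that the mapping $\mathcal F \mapsto \phi_\mathcal F$ is an fpt-reduction: the temporal depth of $\phi_\mathcal F$ stays a function of $k$ only, since every newly introduced $\A[\cdot\U\cdot]$ has bounded nesting (as in the $\AG$ case), and the pathwidth of $\mathcal A_{\phi_\mathcal F}$ grows only by an additive constant because $\mathit{end}, m_0, m_1$ enter a bounded number of bags. The main obstacle, inherited from \Cref{lem:pw+md+AG}, is the stutter-invariance of $\AU$: the counter can only be upper-bounded rather than tested for an exact value, so one has to simultaneously bound the number of ones and the number of zeros per partition from above and argue that these bounds, combined with $\textit{tg}(p) + (n[p]-\textit{tg}(p)) = n[p]$, force equality with the targets. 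Establishing the correspondence between satisfying assignments of $\mathcal F$ meeting the targets and models of $\phi_\mathcal F$ then mirrors the argument of \Cref{lem:pw+md+AG}, with $\A[\cdot\U \mathit{end}]$ taking the place of $\AG$ throughout.
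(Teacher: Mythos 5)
Your proposal is correct and follows essentially the same route as the paper: start from the stutter-invariant $\AG$-reduction of \Cref{lem:pw+md+AG} (parity markers $m_0,m_1$, counters only upper-bounded), introduce an end-marker proposition (the paper uses an extra depth proposition $d_{n+2}$ where you use $\mathit{end}$), and simulate $\AG\phi$ by $\A[\phi\U d_{n+2}]$, with nested $\A[\cdot\U\cdot]$ formulas serving as the stutter-robust next-step/reachability surrogates. Your explicit conjunct $\A[\true\U\mathit{end}]$ is redundant since the until semantics already forces the end marker to be reached on every path, but it does no harm.
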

\begin{proof}
We further modify the reduction from \Cref{lem:pw+md+AG} for the $\AG$-case to simulate the $\AG$-subformulas with the help of $\AU$-formulas as shown in \Cref{fig:au} on page~\pageref{fig:au}. The idea is to introduce another depth proposition after $d_{n+1}$, namely $d_{n+2}$. This is used to express $\AG\phi$ by $\A[\phi\U d_{n+2}]$ without increasing the pathwidth much. \qed
\end{proof}
 
\begin{figure}[h!]
{\scriptsize
 \begin{align*}
   \textit{determined} \dfn & \bigwedge_{i=1}^{n}\Big(q_i\Rightarrow \A[q_i\U d_{n+2}]\Big)\land \bigwedge_{i=1}^{n}\Big(\neg q_i\Rightarrow \A[\neg q_i\U d_{n+2}]\Big)\\
   \textit{depth} \dfn & \bigwedge_{i=0}^{n} \A \Big[\Big((d_i\land \neg d_{i+1}) \Rightarrow
   \big(m_{i \bmod 2} \land \neg m_{1-(i \mod 2)}\\
   & \qquad \qquad \land \A[\neg d_{n+2} \U (d_{i+1}\land \neg d_{i+2} \land \neg d_{n+2})]\Big) \U d_{n+2} \Big]\\
   \textit{setCounter} \dfn & \bigwedge_{i=1}^{n}\A[(d_{i}\land \neg d_{i+1}) \Rightarrow \big((q_{i}\Rightarrow t_{\uparrow part(i)})\land (\neg q_{i}\Rightarrow f_{\uparrow part(i)})\big) \U d_{n+2}]\\
   \textit{incCounter} \dfn & \bigwedge_{p=1}^{k}\bigwedge_{j=0}^{n[p]-1}\bigwedge_{m=0}^{1} \A\Big[\Big((t_{\uparrow p} \land tr^{j}_{p} \land m_i) \Rightarrow \A\big[ m_i \U \A[tr^{j+1}_p \U d_{n+2}]\big]\Big) \land \\
 & \hphantom{\bigwedge_{p=1}^{k}\bigwedge_{j=0}^{n[p]-1}\bigwedge_{m=0}^{1} \A\Big[} \Big((f_{\uparrow p} \land fl^{j}_{p} \land m_i) \Rightarrow \A\big[ m_i \U \A[fl^{j+1}_p \U d_{n+2}]\big]\Big)\Big]\U d_{n+2}\\
\textit{targetMet} \dfn & \bigwedge_{p=1}^{k}\A\Big[\big(d_{n+1} \Rightarrow (tr_{p}^{tg(p)} \land \neg tr_{p}^{tg(p)+1}\land fl_{p}^{n[p]-tg(p)} \land \neg fl_{p}^{n[p]-tg(p)+1} \big)\U d_{n+2} \Big]\\
 \textit{countInit} \dfn & d_0 \land \neg d_1 \land \bigwedge_{p=1}^{k}(\neg tr_{p}^{1} \land \neg fl_{p}^{1} \land \A[(tr_{p}^{0} \land  fl_{p}^{0}) \U d_{n+2}])\\
\textit{countMonotone} \dfn & \bigwedge_{i=1}^{n} \A\left[(d_i \Rightarrow d_{i-1})
\land \bigwedge_{p=1}^{k} \bigwedge_{l=2}^{n[p]} (tr_{p}^{j}\Rightarrow tr_{p}^{j-1}) \land
(fl_{p}^{j}\Rightarrow fl_{p}^{j-1})
 \U d_{n+2}\right]\end{align*}}
\caption{Reduction from p-PW-SAT to $\CTLSAT(\{\AU\})$\label{fig:au}}
\end{figure}

\section{Conclusion} \label{sect:conclusion}

In this work we present an almost complete classification with respect to parameterized complexity of all possible CTL-operator fragments of the satisfiability problem in computation tree logic CTL parameterized by formula pathwidth and temporal depth. Only the case for the fragment containing solely $\AF$ remains open. Currently we are working on a classification of this fragment which aims for an $\FPT$ result and uses the ``full version'' of \Cref{thm:genCourcelle}; the main goal is to bound the model depth of an $\AF$-formula in the full parameter, i.e.,  not only in the temporal depth of the formula. This requires finding lower bounds for the treewidth of the considered structures when the formula enforces a deep model. Then we can construct a family of MSO formulas similar to the $\AX$ case. The classified results form a dichotomy with two fragments in \FPT and the remainder being \W1-hard.

Comparing our results to the situation in usual computational complexity for the decision case they do not behave as expected. Surprisingly the fragment $\{\AX\}$ is \FPT whereas on the decision side this fragment is $\PSPACE$-complete. For the other classified fragments the rule of thumb is the following: The \NP-complete fragments are \FPT whereas the \PSPACE- and \EXPTIME-complete fragments are \W1-hard. For the shown \W1-hardness results an exact classification with matching upper bounds is open for further research. Similarly a complete classification with respect to all possible Boolean fragments in the sense of Post's lattice is one of our next steps.

Furthermore we constructed a generalization of Courcelle's theorem to infinite signatures for parameterized problems $(Q,\kappa)$ with $Q\subseteq\Sigma^*$ such that the treewidth of the relational structures $\mathcal A_x$ corresponding to instances $x\in\Sigma^*$ is $\kappa$-bounded under the existence of a computable family of MSO-formulas (cf.\ \Cref{thm:genCourcelle}). Previously such a general result for infinite signatures was not known to the best of the authors knowledge and is of independent interest.

Another consequent step will be the classification of other temporal logics fragments, e.g., of linear temporal logic LTL and the full branching time logic CTL$^{*}$ with respect to their parameterized complexity. Also the investigation of other parameterizations beyond the usual considered measures of pathwidth or treewidth and temporal depth may lead to a better understanding of intractability in the parameterized sense.

\bibliographystyle{splncs}
\bibliography{pc-ctl}

\end{document}